\documentclass[11pt]{amsart}

\usepackage[utf8]{inputenc}
\usepackage{amsmath, amssymb, amsthm}
\usepackage{mathtools}
\usepackage{hyperref}
\usepackage{enumitem}

\usepackage[a4paper, margin=3.5cm]{geometry}
\theoremstyle{plain}
\newtheorem{theorem}{Theorem}[section]
\newtheorem{proposition}[theorem]{Proposition}

\newtheorem{corollary}[theorem]{Corollary}

\theoremstyle{definition}
\newtheorem{definition}[theorem]{Definition}
\newtheorem{example}[theorem]{Example}
\newtheorem{remark}[theorem]{Remark}

\newcommand{\F}{\mathbb{F}}
\newcommand{\R}{\mathbb{R}}
\newcommand{\Z}{\mathbb{Z}}
\newcommand{\Fq}{\F_q}
\newcommand{\Fqs}{\F_q^{\!\star}}
\newcommand{\Fqn}{\F_q^n}
\newcommand{\Fqns}{\Fqn\setminus\{0\}}
\newcommand{\wt}{\operatorname{wt}_H}
\newcommand{\dH}{d_H}
\newcommand{\angH}{\operatorname{angle}_H}
\newcommand{\angE}{\operatorname{angle}_E}

\title[Angle Between Vectors over Finite Fields]{Angle Between Two Vectors over Finite Fields\\
and an Application to Projective Unique Decoding}

\author{Kamil OTAL
}

\address{T\"UB\.ITAK B\.ILGEM UEKAE, Gebze, Kocaeli, T\"urkiye}
\email{kamil.otal@gmail.com}

\date{\today}
\subjclass[2020]{Primary 94B05; Secondary 94B35, 11T71}
\keywords{Hamming angle, projective Hamming metric, linear codes,
          unique decoding, proximity gaps}
\begin{document}

\begin{abstract}
We introduce a Hamming-type angular function
$$\angH(u,v) := \min_{c \in \Fqs} \dH(u, cv)$$
on pairs of nonzero vectors in $\Fqn$ and show that it satisfies all
three metric axioms up to scalar multiplication. The function $\angH$
is invariant under nonzero scalar multiplication in either argument
and therefore descends to a genuine integer-valued metric on the
projective space $\mathbb{P}(\Fqn)$. As a concrete application, we
prove an \emph{angular} (or \emph{projective}) version of the
unique-decoding theorem for linear codes: if $\angH(u,C\setminus\{0\}) < d/2$,
where $d$ is the minimum distance of the linear code $C$, then the
closest direction in $C$ to $u$ is unique up to nonzero scalar
multiplication. We then discuss how this angular viewpoint relates
to the proximity-gap programme for Reed--Solomon codes. To the best of our knowledge, this is the first attempt to define an angle notion for vectors over finite fields and interpret it from several perspectives, including geometry, coding theory, and cryptography. 
\end{abstract}

\maketitle

\section{Introduction}

Distance and angle are the two basic geometric primitives that
underlie the metric viewpoint on a vector space. Over $\R^n$, both
admit elegant definitions in terms of an inner product, and the
\emph{angular metric}
$$\angE(u,v) = \arccos\!\left(\frac{\langle u,v\rangle}{\|u\|\,\|v\|}\right),
\qquad u,v \in \R^n\setminus\{0\},$$
is the cornerstone of any quantitative notion of ``parallelism''
between vectors.

Over a finite field $\Fq$, the analog of the Euclidean inner product
is well known to be poorly behaved as a measurement of geometry: in
particular, one may have $u\cdot u = 0$ for a nonzero vector $u$, and
the relation $u\cdot v = 0$ does not faithfully capture
orthogonality in any geometric sense. Distance, on the other hand,
is well behaved: the Hamming distance $\dH$ on $\Fqn$ is a genuine
metric, and it is the engine of virtually all coding theory.

In this note we observe that the Hamming distance can be used to
\emph{define} a satisfactory angular notion on $\Fqn$, without ever
invoking an inner product. Explicitly, for nonzero $u, v \in \Fqn$ we
set
$$\angH(u, v) := \min_{c \in \Fqs} \dH(u, cv).$$
Our main theorem (Theorem~\ref{thm:main}) asserts that $\angH$ enjoys
the three identifying properties of a metric, up to scalar
multiplication. Equivalently, $\angH$ descends to an
integer-valued metric on the projective space $\mathbb{P}(\Fqn)$.

The motivation for studying such an object is twofold.
\emph{Conceptually}, it furnishes a geometric vocabulary --- a notion
of two vectors being ``more or less parallel'' --- in a setting where
the inner-product picture fails. \emph{Practically}, the quantity
$\angH$ is exactly the Hamming distance from a vector to a punctured
line through the origin, an object that arises naturally in proximity
testing for linear codes \cite{BCKS25,GHKSS25,GBP25} and in the
list-decoding literature \cite{CHK25}.

Our contributions are as follows.
\begin{enumerate}[label=(\arabic*)]
\item We give a self-contained proof that $\angH$ satisfies the
metric axioms up to scalar equivalence (Theorem~\ref{thm:main}).
\item We record the projective-space interpretation, give a linear-time
algorithm for computing $\angH$, and contrast it with the Euclidean
angular metric (Section~\ref{sec:properties}).
\item We define the Hamming angle from a vector to a set, carefully
relating $\angH(u, S)$ and $\dH(u, S)$ when $S$ is a linear code
(Section~\ref{sec:application}, Proposition~\ref{prop:angle-vs-dist}).
\item As a concrete application, we prove an \emph{angular unique
decoding theorem}: a linear code $C$ of minimum distance $d$ admits
unique projective decoding up to angular radius $d/2$
(Theorem~\ref{thm:angular-unique-decoding}). We then discuss the
relationship between this angular viewpoint and the proximity-gap
problem.
\end{enumerate}

\paragraph{Notation.}
Throughout, $q$ is a prime power and $\Fq$ is the finite field of $q$
elements. We write $\Fqs := \Fq\setminus\{0\}$ for the multiplicative
group. For $u \in \Fqn$, $\wt(u) := |\{i \in [n] : u_i \neq 0\}|$
denotes the Hamming weight, and
$\dH(u,v) := \wt(u-v) = |\{i : u_i\neq v_i\}|$
denotes the Hamming distance. Note the basic identity
$\wt(\alpha u) = \wt(u)$ for any $\alpha \in \Fqs$, which we use
repeatedly without further comment.

\section{Angle in the Euclidean Setting}\label{sec:euclidean}

We briefly revisit the Euclidean situation to motivate what follows.
The signed (oriented) angle between two vectors, although familiar in
the plane, fails as a metric for the standard reasons: it is defined
only modulo $2\pi$, can be negative, and is antisymmetric in the
sense that $\theta(u,v) = -\theta(v,u)$. The triangle inequality
also fails in general. Worse, in dimensions $\geq 3$ there is no
canonical orientation, and the signed angle is not even uniquely
defined without additional choices.

The remedy is to pass to the \emph{unsigned} angle, which is the
length of the shortest arc on the unit sphere connecting the two
normalized vectors. Let $S^{n-1} = \{u \in \R^n : \|u\| = 1\}$ and
$d_{\mathrm{geo}}(u,v) := \arccos\langle u,v\rangle$ denote the
geodesic distance on $S^{n-1}$. Then $d_{\mathrm{geo}}$ is a metric
on $S^{n-1}$ (see, e.g., \cite[Chapter~3]{doCarmo}). For arbitrary
nonzero $u,v \in \R^n$, one defines
$$\angE(u,v) := d_{\mathrm{geo}}\!\left(\frac{u}{\|u\|}, \frac{v}{\|v\|}\right)
= \arccos\!\left(\frac{\langle u,v\rangle}{\|u\|\,\|v\|}\right),$$
which is invariant under positive scalar multiplication in either
argument. Concretely, $\angE$ satisfies:
\begin{enumerate}[label=(\arabic*)]
\item $\angE(u,v) = 0$ iff $u = cv$ for some positive $c$.
\item $\angE(u,v) = \angE(v,u)$ for all nonzero $u,v$.
\item $\angE(u,w) \leq \angE(u,v) + \angE(v,w)$ for all nonzero $u,v,w$.
\end{enumerate}
In short, after normalization the Euclidean angle becomes a bona fide
metric, namely the spherical metric. The principal observation
behind our work is that an analogous mechanism is available over
$\Fq$, with the role of the unit sphere played by the
\emph{projective space} $\mathbb{P}(\Fqn)$.

\begin{remark}
A useful geometric coincidence is the chain
$$\angE(u,v) \;\longleftrightarrow\; \bigl|\mathrm{arc}_E(u,v)\bigr|
\;\longleftrightarrow\; \bigl|\mathrm{chord}(u,v)\bigr|
\;\longleftrightarrow\; d_E(u,v),$$
which holds whenever $u, v \in S^{n-1}$. That is, the Euclidean
angle, the arc length, the chord length, and the Euclidean distance
all agree as ways to compare two unit vectors --- up to the
appropriate transformation. In contrast, over $\Fqn$ the analogous
chain breaks down (Section~\ref{ssec:differences}), and one
genuinely needs the angular notion in addition to the distance.
\end{remark}

\section{The Hamming Angle: Definition and Main Theorem}
\label{sec:main}

\begin{definition}\label{def:angleH}
For $u, v \in \Fqns$, the \emph{Hamming angle} between $u$ and $v$ is
$$\angH(u,v) \;:=\; \min_{c \in \Fqs}\, \dH(u, cv).$$
\end{definition}

Note that $\angH(u,v)$ is a nonnegative integer in $\{0,1,\dots,n\}$.
Equivalently, $\angH(u,v) = n - \max_{c \in \Fqs} \mathrm{agr}(u, cv)$,
where $\mathrm{agr}(x,y) := |\{i : x_i = y_i\}|$ is the agreement
count.

The following is our main theorem.

\begin{theorem}\label{thm:main}
The Hamming angle function $\angH \colon \Fqns \times \Fqns \to \Z$
satisfies the following three properties:
\begin{enumerate}[label=\textup{(\arabic*)}]
\item \textup{(Scalar identifiability)} $\angH(u,v) = 0$ if and only
      if $u = cv$ for some $c \in \Fqs$.
\item \textup{(Symmetry)} $\angH(u,v) = \angH(v,u)$ for all
      $u, v \in \Fqns$.
\item \textup{(Triangle inequality)}
      $\angH(u,w) \leq \angH(u,v) + \angH(v,w)$
      for all $u, v, w \in \Fqns$.
\end{enumerate}
\end{theorem}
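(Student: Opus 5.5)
The proof rests on two elementary facts about the Hamming distance on $\Fqn$: it is a metric, and it is invariant under simultaneous scaling, $\dH(\alpha x,\alpha y)=\wt(\alpha(x-y))=\wt(x-y)=\dH(x,y)$ for $\alpha\in\Fqs$. I will prove the three properties in the order stated. Since $\Fqs$ is finite, every minimum in Definition~\ref{def:angleH} is attained, and I will choose minimizing scalars explicitly.

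For (1), suppose $\angH(u,v)=0$. Then some $c\in\Fqs$ gives $\dH(u,cv)=0$, and since $\dH$ is a metric this forces $u=cv$. Conversely, if $u=cv$ with $c\in\Fqs$, then $0\le\angH(u,v)\le\dH(u,cv)=0$.

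For (2), I will show that the sets of values $\{\dH(u,cv):c\in\Fqs\}$ and $\{\dH(v,c'u):c'\in\Fqs\}$ coincide, so their minima agree. Given $c\in\Fqs$, scaling invariance with $\alpha=c^{-1}$ gives
$$\dH(u,cv)=\dH(c^{-1}u,v)=\dH(v,c^{-1}u).$$
Because $c\mapsto c^{-1}$ is a bijection of $\Fqs$, the two sets are equal.

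For (3), I will compose scalars. Pick $a\in\Fqs$ with $\dH(u,av)=\angH(u,v)$ and $b\in\Fqs$ with $\dH(v,bw)=\angH(v,w)$. Then
$$\angH(u,w)\le\dH(u,abw)\le\dH(u,av)+\dH(av,abw)=\dH(u,av)+\dH(v,bw).$$
The first inequality holds because $ab\in\Fqs$, the second is the Hamming triangle inequality, and the final equality is scaling invariance with $\alpha=a$. The right-hand side equals $\angH(u,v)+\angH(v,w)$.

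The main obstacle is the triangle inequality, specifically making the middle term match $\angH(v,w)$. The resolution is to rescale the second witness by $a$, which works because $\Fqs$ is a group acting on $\Fqn$ by Hamming isometries. Nothing here uses more than this group structure and invariance, so no other earlier result is needed. Finally, together with (1) and (2), the same invariance shows that $\angH$ depends only on the lines $\Fqs u$ and $\Fqs v$.
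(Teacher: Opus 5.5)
Your proof is correct and follows essentially the paper's argument: (1) is identical, (2) uses the same inversion $c \mapsto c^{-1}$ with scalar invariance of the weight, and (3) combines minimizing witnesses via the Hamming triangle inequality plus a rescaling. The only cosmetic difference is that you rescale the second witness by $a$, using $\dH(av,abw)=\dH(v,bw)$, whereas the paper first uses symmetry to write $\angH(u,v)=\dH(c_1u,v)$ and then rescales by $c_1^{-1}$, so your version of (3) does not depend on (2).
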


\begin{proof}
\textbf{(1)} The Hamming distance is a metric, so
$\dH(u, cv) = 0$ if and only if $u = cv$. Since $\angH(u,v)$ is a
minimum of nonnegative quantities, it equals $0$ if and only if
$\dH(u, cv) = 0$ for some $c \in \Fqs$, i.e., $u = cv$ for some
$c \in \Fqs$.

\medskip\noindent
\textbf{(2)} Let $c_0 \in \Fqs$ be an element attaining the minimum
in $\angH(u,v)$, so that $\angH(u,v) = \dH(u, c_0 v)$. Then
$$\dH(u, c_0 v) = \wt(u - c_0 v) = \wt\!\bigl(c_0^{-1}(u - c_0 v)\bigr)
                = \wt(c_0^{-1} u - v) = \dH(v, c_0^{-1} u),$$
where the second equality uses the invariance of $\wt$ under
multiplication by the nonzero scalar $c_0^{-1}$. Since
$c_0^{-1} \in \Fqs$, we conclude
$$\angH(u,v) = \dH(v, c_0^{-1} u) \;\geq\; \min_{c \in \Fqs} \dH(v, cu)
= \angH(v, u).$$
The reverse inequality follows by the same argument starting from
$\angH(v,u)$. Hence $\angH(u,v) = \angH(v,u)$.

\medskip\noindent
\textbf{(3)} By part (2) we may equivalently write
$$\angH(u,v) = \min_{c \in \Fqs} \dH(c u, v).$$
Pick $c_1, c_2 \in \Fqs$ attaining the minima for $\angH(u,v)$ and
$\angH(v,w)$, respectively, so that
$$\angH(u,v) = \dH(c_1 u, v), \qquad
  \angH(v,w) = \dH(v, c_2 w).$$
By the triangle inequality for $\dH$,
$$\angH(u,v) + \angH(v,w) = \dH(c_1 u, v) + \dH(v, c_2 w)
                       \;\geq\; \dH(c_1 u, c_2 w).$$
Now use again the scalar-invariance of $\wt$:
$$\dH(c_1 u, c_2 w) = \wt(c_1 u - c_2 w)
                  = \wt\!\bigl(c_1^{-1}(c_1 u - c_2 w)\bigr)
                  = \wt(u - c_1^{-1}c_2\, w)
                  = \dH\!\bigl(u, c_1^{-1}c_2\, w\bigr).$$
Since $c_1^{-1}c_2 \in \Fqs$,
$$\dH(c_1 u, c_2 w) \;\geq\; \min_{c \in \Fqs} \dH(u, cw)
                          = \angH(u, w).$$
Combining the two displays gives the triangle inequality. \qedhere
\end{proof}

\section{Properties of the Hamming Angle}\label{sec:properties}

\subsection{Projective-space interpretation}\label{ssec:projective}

By definition, $\angH(u, v) = \angH(\alpha u, \beta v)$ for any
$\alpha, \beta \in \Fqs$: the rescaling is absorbed into the choice
of $c$ inside the minimum. Hence $\angH$ depends only on the
\emph{projective points} $[u], [v] \in \mathbb{P}(\Fqn)$.

\begin{proposition}\label{prop:projective-metric}
The function $\overline{\angH}([u], [v]) := \angH(u, v)$ is a
well-defined integer-valued metric on $\mathbb{P}(\Fqn)$.
\end{proposition}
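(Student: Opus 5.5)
The plan is to separate the statement into two parts: well-definedness on projective points, and the metric axioms. Both reduce to Theorem~\ref{thm:main} together with the scalar invariance noted just before the proposition.

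\textbf{Step 1 (well-definedness).} Suppose $[u]=[u']$ and $[v]=[v']$. Then $u'=\alpha u$ and $v'=\beta v$ for some $\alpha,\beta\in\Fqs$. We must show $\angH(\alpha u,\beta v)=\angH(u,v)$. Using $\wt(\alpha^{-1}x)=\wt(x)$, we get
$$\dH(\alpha u, c\beta v)=\wt(\alpha u-c\beta v)=\wt(u-\alpha^{-1}c\beta\, v)=\dH(u,\alpha^{-1}\beta c\, v).$$
As $c$ runs over $\Fqs$, so does $c':=\alpha^{-1}\beta c$, because multiplication by a nonzero element is a bijection of the group $\Fqs$. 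Taking minima over $c$ then gives equality of the two angles. Hence $\overline{\angH}$ is well defined. It is clearly integer-valued and nonnegative, with values in $\{0,\dots,n\}$.

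\textbf{Step 2 (metric axioms).}
\begin{itemize}
\item \emph{Identity of indiscernibles.} By Theorem~\ref{thm:main}(1), $\overline{\angH}([u],[v])=0$ if and only if $u=cv$ for some $c\in\Fqs$, which is exactly the condition $[u]=[v]$ in $\mathbb{P}(\Fqn)$. In particular $\overline{\angH}([u],[u])=0$.
\item \emph{Symmetry.} This follows directly from Theorem~\ref{thm:main}(2), applied to any representatives.
\item \emph{Triangle inequality.} This follows from Theorem~\ref{thm:main}(3), applied to chosen representatives $u,v,w$ of the three points. Step 1 guarantees that the inequality does not depend on which representatives are chosen.
\end{itemize}

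\textbf{Main obstacle.} There is no substantive obstacle. The only point requiring care is Step 1, specifically checking that the reparametrization $c\mapsto\alpha^{-1}\beta c$ is a bijection of $\Fqs$, so that the minimum is unchanged. Everything else is a direct transfer of Theorem~\ref{thm:main} to equivalence classes.
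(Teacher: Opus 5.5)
Your proposal is correct and follows essentially the same route as the paper: well-definedness comes from the scalar invariance $\angH(\alpha u,\beta v)=\angH(u,v)$, and the metric axioms are transferred directly from Theorem~\ref{thm:main} to equivalence classes. You also verify the invariance more carefully than the paper's one-line remark, using weight invariance and the fact that $c\mapsto\alpha^{-1}\beta c$ is a bijection of $\Fqs$.
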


Thus, just as $\angE$ is the geodesic metric on $S^{n-1}$, $\angH$ is
a natural metric on the projective space $\mathbb{P}(\Fqn)$. In
particular, $\mathbb{P}(\Fqn)$ plays the role over $\Fq$ that the unit
sphere plays over $\R$.

\begin{remark}
The metric $\overline{\angH}$ is (essentially) the \emph{projective
Hamming metric}, and the underlying quantity $\min_{c}\dH(u,cv)$ has
been studied in coding theory in the context of projective Reed--Solomon
codes and Grassmann codes (where one considers distances between
$k$-dimensional subspaces). The novelty of the present viewpoint is
the explicit angular interpretation and the resulting metric axioms,
which appear not to have been recorded together in this form.
\end{remark}

\subsection{Efficient computation of $\angH$}
\label{ssec:computation}

Although $\angH(u, v)$ is a minimum over $q-1$ scalars, it can be
computed in a single pass:

\begin{proposition}\label{prop:fast-compute}
Given $u, v \in \Fqns$, $\angH(u, v)$ can be computed in time $O(n)$
(under the standard assumption that field operations cost $O(1)$).
\end{proposition}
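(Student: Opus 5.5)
The plan is to rewrite the minimum over $c$ as a maximum of agreement counts that can be tallied in one pass. By the remark after Definition~\ref{def:angleH},
$$\angH(u,v) = n - \max_{c\in\Fqs} \mathrm{agr}(u,cv).$$
So it suffices to compute $A(c) := |\{i : u_i = c v_i\}|$ for every $c\in\Fqs$ and take the maximum.

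First partition the coordinates into four types. Let $Z_0 := |\{i : u_i = 0 = v_i\}|$; such coordinates agree for every $c$. Coordinates with exactly one of $u_i, v_i$ equal to zero agree for no $c$. The remaining coordinates, with $u_i\neq 0$ and $v_i\neq 0$, agree exactly when $c = u_i v_i^{-1}$. Hence
$$A(c) = Z_0 + N(c), \qquad N(c) := |\{i : u_i v_i \neq 0,\ u_i v_i^{-1} = c\}|,$$
and therefore
$$\angH(u,v) = n - Z_0 - \max_{c\in\Fqs} N(c),$$
with the convention that the maximum is $0$ if no coordinate has both entries nonzero.

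The algorithm scans $i=1,\dots,n$ once. It increments $Z_0$ when both entries vanish. When both are nonzero, it computes the ratio $r_i = u_i v_i^{-1}$ with $O(1)$ field operations and increments a counter for $r_i$. Throughout the scan it also maintains a running maximum of the counters.

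The one real issue is the data structure for the counters. An array indexed by $\Fqs$ has size $q-1$, and initializing it would add an $O(q)$ term, which is not $O(n)$ when $q\gg n$. I would avoid this in one of two ways:
\begin{enumerate}[label=(\alph*)]
\item use a hash map keyed by the at most $n$ distinct ratios actually encountered, giving expected $O(n)$ time;
\item alternatively, for a deterministic bound, use the standard lazily initialized array trick, which gives $O(1)$ access without clearing memory.
\end{enumerate}
If the intended cost model treats $q$ as fixed, a plain array already suffices. I would state this modelling point explicitly, since it is the only place where the $O(n)$ claim needs care. Correctness then follows immediately from the identity above.
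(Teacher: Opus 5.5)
Your proof is correct and is essentially the paper's argument: the same partition into both-zero, exactly-one-zero and both-nonzero coordinates, the identity $\angH(u,v) = n - Z_0 - \max_{c\in\Fqs} N(c)$ (the paper writes $N_0$ and $A_c$), and a single pass that increments counters indexed by $u_i v_i^{-1}$. Your remark about the counter data structure is a worthwhile clarification that the paper's proof passes over silently: a naive array indexed by $\Fqs$ costs $O(q)$ to initialize, which can be handled by hashing or a lazily initialized array.
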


\begin{proof}
Partition $[n]$ according to the joint values of $(u_i, v_i)$:
\begin{itemize}
\item $N_0 := |\{i : u_i = 0 \text{ and } v_i = 0\}|$
     --- positions where $u$ and $cv$ agree for \emph{every} $c$;
\item $N_1 := |\{i : u_i \neq 0,\; v_i = 0\}|$
     --- positions where $u_i \neq (cv)_i = 0$ for every $c \in \Fqs$;
\item $N_2 := |\{i : u_i = 0,\; v_i \neq 0\}|$
     --- positions where $u_i = 0 \neq cv_i$ for every $c \in \Fqs$;
\item For $c \in \Fqs$, $A_c := |\{i : u_i, v_i \neq 0,\; u_i = c v_i\}|$
     --- positions where $u_i = (cv)_i$ for that particular $c$.
\end{itemize}
A position contributes to $\dH(u, cv)$ unless it is counted in
$N_0$ or in $A_c$. Therefore
$$\dH(u, cv) = n - N_0 - A_c, \quad\text{and}\quad
  \angH(u, v) = n - N_0 - \max_{c \in \Fqs} A_c.$$
To compute $\max_c A_c$, iterate over $i \in [n]$: whenever both
$u_i$ and $v_i$ are nonzero, increment a counter indexed by the
ratio $u_i / v_i \in \Fqs$. The maximum counter value is
$\max_c A_c$. The total work is $O(n)$.
\end{proof}

\subsection{Extremal values and small examples}\label{ssec:examples}

The smallest value $\angH(u,v) = 0$ is achieved exactly when $u, v$
span the same projective point. At the other extreme:

\begin{proposition}\label{prop:max-angle}
For $u, v \in \Fqns$,
$\angH(u, v) \leq n - 1$ \emph{unless} at every position exactly one
of $u_i, v_i$ vanishes, in which case $\angH(u, v) = n$.
\end{proposition}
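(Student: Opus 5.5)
The plan is to use the counting formula from the proof of Proposition~\ref{prop:fast-compute},
$$\angH(u,v) = n - N_0 - \max_{c\in\Fqs} A_c,$$
where $N_0$, $N_1$, $N_2$ and the $A_c$ are as defined there. Since $N_0\ge 0$ and $A_c\ge 0$, we always have $\angH(u,v)\le n$. Moreover, $\angH(u,v)=n$ holds exactly when $N_0=0$ and $A_c=0$ for every $c\in\Fqs$. So it suffices to show that this vanishing condition is equivalent to ``at every position exactly one of $u_i,v_i$ vanishes''. Once that equivalence is in hand, integrality of $\angH$ gives $\angH(u,v)\le n-1$ in every other case.

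For the equivalence, note that each position $i$ falls into exactly one of four classes. Either both $u_i,v_i$ vanish (counted in $N_0$), or exactly one vanishes (counted in $N_1$ or $N_2$), or both are nonzero. In the last case the position is counted in $A_{c}$ for $c=u_i/v_i\in\Fqs$. Thus $\sum_{c\in\Fqs} A_c$ equals the number of positions where both entries are nonzero. Consequently $N_0=0$ and $\max_c A_c=0$ together say precisely that no position has both entries zero and no position has both entries nonzero. In other words, every position has exactly one zero entry, i.e.\ $N_1+N_2=n$. In that case the formula returns $\angH(u,v)=n$.

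Conversely, suppose some position $i$ has both entries zero or both entries nonzero. In the first case $N_0\ge1$. In the second case $A_{u_i/v_i}\ge 1$, so $\max_c A_c\ge1$. Either way $\angH(u,v)\le n-1$.

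There is no real obstacle here. The only point needing care is that the ratio $u_i/v_i$ is a legitimate element of $\Fqs$, which requires $u_i,v_i\neq0$. That is exactly the case where the ratio is used, so the argument is complete. I would also remark that the extremal case is consistent with the hypothesis $u,v\in\Fqns$: the condition forces the supports of $u$ and $v$ to partition $[n]$, and both supports are nonempty.
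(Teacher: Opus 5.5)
Your proof is correct and is essentially the paper's argument. Both use the formula $\angH(u,v) = n - N_0 - \max_{c\in\Fqs} A_c$ from Proposition~\ref{prop:fast-compute}: a position where $u_i, v_i$ are both nonzero forces $A_{u_i v_i^{-1}} \geq 1$, and otherwise the value reduces to $n - N_0$ (you simply package this as $\angH(u,v) = n \iff N_0 = 0 = \max_c A_c$).
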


\begin{proof}
If some position $i$ has $u_i \neq 0$ and $v_i \neq 0$, take
$c = u_i v_i^{-1} \in \Fqs$ in Proposition~\ref{prop:fast-compute};
then $A_c \geq 1$, so $\angH(u,v) \leq n - 1$. Otherwise,
$N_0 + N_1 + N_2 = n$ and $\max_c A_c = 0$, so
$\angH(u,v) = n - N_0$. The latter equals $n$ exactly when $N_0 = 0$,
i.e., exactly one of $u_i, v_i$ is nonzero at every position.
\end{proof}

\begin{example}[The case $q = 2$]
Since $\Fqs = \{1\}$ when $q = 2$, the definition reduces to
$\angH(u,v) = \dH(u,v)$. In other words, over $\F_2$ the angular
metric coincides with the Hamming metric on $\F_2^n\setminus\{0\}$.
The genuinely new content of $\angH$ appears for $q \geq 3$.
\end{example}

\begin{example}[A computation in $\F_3^3$]\label{ex:f3}
Let $u = (1, 2, 0)$ and $v = (1, 1, 2)$ in $\F_3^3$. Compute:
\begin{itemize}
\item For $c = 1$: $cv = (1,1,2)$ and $\dH(u, cv) = |\{2, 3\}| = 2$.
\item For $c = 2$: $cv = (2,2,1)$ and $\dH(u, cv) = |\{1, 3\}| = 2$.
\end{itemize}
Hence $\angH(u,v) = 2$. Note that
$u \cdot v = 1\cdot 1 + 2\cdot 1 + 0\cdot 2 = 3 \equiv 0 \pmod{3}$,
so $u$ and $v$ are ``orthogonal'' in the inner-product sense, yet
$\angH(u,v) = 2 < 3 = n$. This illustrates the next subsection.
\end{example}

\subsection{Differences from the Euclidean angle}
\label{ssec:differences}

We highlight two qualitative differences between $\angH$ and $\angE$.

\paragraph{(A) Inner products are not faithful detectors of right angles.}
Over $\R^n$, $u \cdot v = 0$ \emph{iff} $\angE(u,v) = \pi/2$. The
corresponding equivalence fails over $\Fq$: as in
Example~\ref{ex:f3}, one can have $u\cdot v = 0$ while $\angH(u,v)$
is strictly less than $n$. In fact, $\angH(u,v) = n$ \emph{implies}
$u \cdot v = 0$ (since by Proposition~\ref{prop:max-angle} the
supports of $u$ and $v$ are disjoint), but the converse fails.
Consequently, there is no clean way to read off ``trigonometric''
quantities from $\angH$, and the relation between the inner product
and the angle is one-way only.

\paragraph{(B) $\angH$ is statistical-distance-like.}
Recall that the statistical distance between two probability
distributions $P, Q$ is $1$ exactly when $P$ and $Q$ have disjoint
supports. By Proposition~\ref{prop:max-angle}, the analogous
statement holds for $\angH$: the maximum value $\angH(u,v) = n$ is
achieved precisely when $u, v$ have ``disjoint structure'' in the
sense that at every coordinate at most one of them is nonzero. The
Euclidean $\angE$ exhibits no such phenomenon.

\section{An Application: Projective Unique Decoding}
\label{sec:application}

We now apply $\angH$ to a concrete problem in coding theory. Our
result is an angular (or \emph{projective}) refinement of the
classical unique-decoding theorem.

\subsection{The Hamming angle to a set}
\label{ssec:angle-to-set}

\begin{definition}\label{def:angle-to-set}
For nonzero $u \in \Fqn$ and a subset $S \subseteq \Fqn$ with
$S \cap \Fqns \neq \emptyset$, define
$$\angH(u, S) \;:=\; \min_{v \in S \cap \Fqns}\, \angH(u, v).$$
\end{definition}

The next proposition records the exact relationship between
$\angH(u, S)$ and the Hamming distance $\dH(u, S)$ when $S$ is a
linear subspace. We caution that the naive identity $\angH(u, S) =
\dH(u, S)$ is \emph{false} in general: the two quantities differ when
the zero vector is the closest point of $S$ to $u$.

\begin{proposition}\label{prop:angle-vs-dist}
Let $C \subseteq \Fqn$ be a linear subspace with $C \neq \{0\}$, and
let $u \in \Fqns$. Then
$$\angH(u, C) \;=\; \min_{c \in C \setminus \{0\}}\, \dH(u, c)
\;\geq\; \dH(u, C),$$
with equality $\angH(u, C) = \dH(u, C)$ if and only if the minimum
$\dH(u, C) = \min_{c \in C} \dH(u, c)$ is attained at some nonzero
$c \in C$.
\end{proposition}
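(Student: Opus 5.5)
The plan is to unfold the definitions and use the fact that a linear subspace is closed under nonzero scalar multiplication. By Definition~\ref{def:angle-to-set} and Definition~\ref{def:angleH},
$$\angH(u,C) = \min_{v \in C\setminus\{0\}} \min_{c \in \Fqs} \dH(u, cv).$$
I will show that the set of vectors $\{cv : v \in C\setminus\{0\},\ c \in \Fqs\}$ is exactly $C\setminus\{0\}$. For one inclusion, $cv \in C$ because $C$ is linear, and $cv \neq 0$ because $c \neq 0$ and $v \neq 0$. For the other, take $c = 1$. Hence the double minimum is a minimum of $\dH(u,w)$ over $w \in C\setminus\{0\}$, which is the claimed identity. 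The hypothesis $C \neq \{0\}$ ensures that the index set is nonempty, so the definition makes sense.

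For the inequality, $C\setminus\{0\} \subseteq C$, and a minimum over a subset is at least the minimum over the whole set. This gives $\angH(u,C) \geq \dH(u,C)$.

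For the equality criterion, write $\dH(u,C) = \min\bigl(\dH(u,0),\ \min_{w \in C\setminus\{0\}} \dH(u,w)\bigr) = \min\bigl(\wt(u), \angH(u,C)\bigr)$.

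\begin{itemize}
\item Suppose the minimum $\dH(u,C)$ is attained at some nonzero $w \in C$. Then $\angH(u,C) \leq \dH(u,w) = \dH(u,C)$, and combined with the inequality already proved this gives equality.
\item Conversely, suppose $\angH(u,C) = \dH(u,C)$. Pick $w \in C\setminus\{0\}$ attaining $\angH(u,C)$; it exists since the set is finite and nonempty. Then $\dH(u,w) = \dH(u,C)$, so the minimum is attained at a nonzero codeword.
\end{itemize}

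There is no real obstacle here. The only points needing care are the closure of $C\setminus\{0\}$ under $\Fqs$-scaling, which is where linearity is used, and the attainment of minima over finite nonempty sets. I will also remark that the hypothesis $u \neq 0$ is needed only so that $\angH(u,\cdot)$ is defined.
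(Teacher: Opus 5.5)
Your proposal is correct and matches the paper's proof: closure of $C\setminus\{0\}$ under $\Fqs$-scaling collapses the double minimum to $\min_{w\in C\setminus\{0\}}\dH(u,w)$, and comparing with $\dH(u,C)=\min\bigl(\wt(u),\angH(u,C)\bigr)$ gives both the inequality and the equality criterion. You spell out the two directions of the equality criterion, and where the finiteness and nonemptiness of $C\setminus\{0\}$ are used, more explicitly than the paper does.
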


\begin{proof}
Since $C$ is a subspace, for every $v \in C\setminus\{0\}$ and every
$c \in \Fqs$ we have $cv \in C\setminus\{0\}$. Conversely, any
$w \in C\setminus\{0\}$ is of the form $1\cdot w$. Hence
\begin{align*}
\angH(u, C) &= \min_{v \in C\setminus\{0\},\; c \in \Fqs}\dH(u, cv)
            = \min_{w \in C\setminus\{0\}} \dH(u, w).
\end{align*}
Comparing this with
$\dH(u, C) = \min_{w \in C} \dH(u, w) = \min\bigl(\wt(u),\,
\min_{w \in C\setminus\{0\}}\dH(u, w)\bigr)$
gives the claim.
\end{proof}

\begin{remark}\label{rem:proximity-regime}
In the regimes of interest for proximity testing
\cite{BCKS25,GHKSS25}, the input $u$ is purported to be close to a
codeword $c \neq 0$, in particular closer to $c$ than to $0$. There,
$\angH(u, C) = \dH(u, C)$ and the two notions coincide. The careful
distinction in Proposition~\ref{prop:angle-vs-dist} is what one needs
in order to use $\angH$ as a drop-in replacement for $\dH$ in such
arguments.
\end{remark}

\subsection{Angular unique decoding}
\label{ssec:angular-decoding}

Recall that the \emph{minimum distance} of a linear code
$C \subseteq \Fqn$ is
$d := \min_{c \in C\setminus\{0\}} \wt(c)$,
which (by linearity) coincides with
$$\min_{c, c' \in C,\, c\neq c'} \dH(c, c').$$ The classical unique
decoding theorem states that, for any $u \in \Fqn$, there is at most
one codeword $c \in C$ with $\dH(u, c) < d/2$.

We prove the angular analog.

\begin{theorem}[Angular unique decoding]\label{thm:angular-unique-decoding}
Let $C \subseteq \Fqn$ be a linear code of minimum distance $d$, and
let $u \in \Fqns$. If $\angH(u, C) < d/2$, then the minimum
$\angH(u, C) = \min_{c \in C\setminus\{0\}} \angH(u, c)$
is attained at a codeword $c^\star \in C\setminus\{0\}$ that is
unique up to multiplication by an element of $\Fqs$. Equivalently,
the projective point $[c^\star] \in \mathbb{P}(C)$ is uniquely
determined by $u$.
\end{theorem}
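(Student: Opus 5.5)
The proof reduces the statement to classical unique decoding via Proposition~\ref{prop:angle-vs-dist} and the triangle inequality of Theorem~\ref{thm:main}.

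\textbf{Step 1: a minimizer exists.} Set $r := \angH(u, C)$. By Proposition~\ref{prop:angle-vs-dist},
$$r = \min_{w \in C\setminus\{0\}} \dH(u, w).$$
So the minimum is attained at some nonzero codeword $c^\star$ with $\dH(u, c^\star) = r$, and hence $\angH(u, c^\star) = r$ as well.

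\textbf{Step 2: any minimizer lies on the projective point $[c^\star]$.} Let $c' \in C\setminus\{0\}$ be another minimizer, so $\angH(u, c') = r$. By symmetry and the triangle inequality of Theorem~\ref{thm:main},
$$\angH(c^\star, c') \le \angH(c^\star, u) + \angH(u, c') = 2r < d.$$
By definition of $\angH$, there is then some $\lambda \in \Fqs$ with $\dH(c^\star, \lambda c') < d$. Both $c^\star$ and $\lambda c'$ lie in $C$, since $C$ is linear, and the minimum distance of $C$ is $d$. Therefore $c^\star = \lambda c'$, that is, $[c'] = [c^\star]$.

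\textbf{Step 3: no other direction is competitive.} Uniqueness up to scalars in the sense of minimizers is settled by Step 2. The "equivalently" clause is just the statement $[c'] = [c^\star]$, which holds because $\angH$ is constant on projective classes. Nothing further is needed beyond observing that $\lambda^{-1} \in \Fqs$.

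The main subtlety is checking that $2r < d$ forces the scalar-adjusted pair to coincide. This requires the minimizing $\lambda$ to produce a nonzero codeword $\lambda c'$, which holds because $c' \neq 0$ and $\lambda \neq 0$. It also requires that the strict inequality $r < d/2$ gives $2r \le d-1$ with integer values. Both points are routine, so I expect no real obstacle.
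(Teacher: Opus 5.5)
Your proof is correct and follows essentially the same route as the paper: the triangle inequality of Theorem~\ref{thm:main} gives $\angH(c^\star,c')<d$, and the minimum distance of the linear code then forces $c'$ to be a scalar multiple of $c^\star$. The paper argues by contradiction, showing $\wt(c_1-\alpha c_2)\ge d$ for every $\alpha\in\Fqs$, and it treats any two codewords with $\angH(u,c_i)<d/2$ rather than only minimizers; your Step~2 works verbatim in that generality, which is what gives Corollary~\ref{cor:list-size-one}.
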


\begin{proof}
Suppose for contradiction that $c_1, c_2 \in C\setminus\{0\}$ both
satisfy $\angH(u, c_i) < d/2$ for $i = 1, 2$, and that
$c_1 \notin \Fqs \cdot c_2$, i.e., $[c_1] \neq [c_2]$ in
$\mathbb{P}(C)$.

By the triangle inequality of $\angH$
(Theorem~\ref{thm:main}(3)),
$$\angH(c_1, c_2) \leq \angH(c_1, u) + \angH(u, c_2)
                  < \frac{d}{2} + \frac{d}{2} = d.$$
On the other hand,
$$\angH(c_1, c_2) = \min_{\alpha \in \Fqs} \dH(c_1, \alpha c_2)
                = \min_{\alpha \in \Fqs} \wt(c_1 - \alpha c_2).$$
For each $\alpha \in \Fqs$, $c_1 - \alpha c_2 \in C$ by linearity,
and $c_1 - \alpha c_2 \neq 0$ because $[c_1] \neq [c_2]$ (otherwise
$c_1 = \alpha c_2$ would put $c_1$ in $\Fqs \cdot c_2$). Therefore
each $c_1 - \alpha c_2$ is a \emph{nonzero} codeword, and
$\wt(c_1 - \alpha c_2) \geq d$. Taking the minimum over $\alpha$
gives $\angH(c_1, c_2) \geq d$, contradicting the previous display.
\end{proof}

\begin{remark}\label{rem:vs-classical}
Theorem~\ref{thm:angular-unique-decoding} is the projective analog
of classical unique decoding. The classical theorem fixes a
codeword $c^\star$, whereas the angular version fixes only the
\emph{direction} $[c^\star]$. Both theorems share the same radius
$d/2$, and indeed the proof reduces to applying the triangle
inequality of an appropriate metric --- $\dH$ in the classical case,
$\angH$ in the projective case. In this sense
Theorem~\ref{thm:angular-unique-decoding} is not a strengthening of
the classical theorem but rather its natural projective sibling.
\end{remark}

\begin{corollary}\label{cor:list-size-one}
Under the hypothesis of Theorem~\ref{thm:angular-unique-decoding}, the
projective list-decoding list
$$L_{\mathrm{proj}}(u, \rho) := \bigl\{[c] \in \mathbb{P}(C) :
\angH(u, c) < \rho \bigr\}$$
has size at most $1$ for any $\rho < d/2$.
\end{corollary}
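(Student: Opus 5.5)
The plan is to derive Corollary~\ref{cor:list-size-one} directly from the triangle-inequality argument in the proof of Theorem~\ref{thm:angular-unique-decoding}. That argument in fact shows more than the theorem states: any two codewords within angular distance $<d/2$ of $u$ must lie on the same projective point. First I will check that $L_{\mathrm{proj}}(u,\rho)$ is well defined. The condition $\angH(u,c)<\rho$ depends only on $[c]$, since by Section~\ref{ssec:projective} we have $\angH(u,\beta c)=\angH(u,c)$ for all $\beta\in\Fqs$. Hence membership in the list is a property of the class $[c]\in\mathbb{P}(C)$, and only nonzero $c$ appear.

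The main step is as follows. Suppose $[c_1],[c_2]\in L_{\mathrm{proj}}(u,\rho)$ with $\rho<d/2$. Then $\angH(u,c_i)<\rho<d/2$ for $i=1,2$. If $[c_1]\neq[c_2]$, Theorem~\ref{thm:main}(2) and (3) give
$$\angH(c_1,c_2)\le \angH(c_1,u)+\angH(u,c_2)<d.$$
On the other hand, $c_1-\alpha c_2$ is a nonzero codeword for every $\alpha\in\Fqs$, so
$$\angH(c_1,c_2)=\min_{\alpha\in\Fqs}\wt(c_1-\alpha c_2)\ge d.$$
These two bounds contradict each other, so the list contains at most one projective point. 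This is exactly the contradiction derived in the proof of Theorem~\ref{thm:angular-unique-decoding}, and I will cite that proof rather than repeat it.

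The main point to handle is a slight mismatch between the hypotheses. The theorem assumes $\angH(u,C)<d/2$, whereas the corollary only needs $\rho<d/2$. If the list is empty the conclusion holds trivially. If it is nonempty, then $\angH(u,C)<\rho<d/2$, so the theorem's hypothesis holds automatically. In either case the pairwise argument above applies without using any minimality of $c^\star$. I expect no real obstacle beyond stating this reduction cleanly.
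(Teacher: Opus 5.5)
Your proposal is correct and follows essentially the same route as the paper: the corollary is read off from the proof of Theorem~\ref{thm:angular-unique-decoding}, whose contradiction argument applies verbatim to any two distinct projective points $[c_1],[c_2]$ with $\angH(u,c_i)<\rho<d/2$. Your extra checks are fine and slightly more careful than the paper, which states the corollary without a separate proof: that list membership depends only on $[c]$, and that the theorem's hypothesis holds automatically whenever the list is nonempty.
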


\subsection{Relation to the proximity-gap problem}
\label{ssec:relation-proximity}

The proximity-gap phenomenon \cite{BCKS25,GHKSS25} concerns
\emph{affine} lines in $\Fqn$: given $f, g \in \Fqn$, one studies the
line $L_{f,g} := \{f + zg : z \in \Fq\}$ and asks whether the points
on $L_{f,g}$ are uniformly close to a code $C$ or, on the contrary,
uniformly far from it. The relevant codewords come in correlated
pairs $(c_0, c_1)$, with $c_0 + z c_1$ tracking $f + z g$ on a common
agreement set.

The function $\angH$ does not directly capture this affine geometry;
it is by design invariant under scalar multiplication only in
\emph{each argument separately}, whereas a proximity-gap statement
needs the joint linear structure of $(f, g)$. However, the two
frameworks meet at the following points.

\begin{itemize}
\item \emph{Both produce decoding radii of the form $d/2$ or $J(\delta)$.}
Theorem~\ref{thm:angular-unique-decoding} corresponds to the
``trivial'' unique-decoding-radius regime in which the proximity loss
$\varepsilon^\star$ is essentially $0$. Up to the Johnson radius
$J(\delta) = 1 - \sqrt{1-\delta}$, recent work \cite{BCKS25} achieves
the same $\varepsilon^\star = 0$ for affine lines, but at the price
of $O(n)$ exceptional values of $z$.

\item \emph{Correlated agreement is a 2-vector projective statement.}
A pair of vectors $(u_0, u_1)$ has $\delta$-correlated agreement with
a code $C$ if there exist $c_0, c_1 \in C$ and $T \subseteq [n]$ with
$|T| \geq (1-\delta)n$ such that $u_i$ and $c_i$ agree on $T$ for
each $i$. This is precisely the statement that the
\emph{$2$-dimensional} subspace $\langle u_0, u_1\rangle \subseteq
\Fqn$, modulo $C \oplus C$, has a representative of small weight ---
a higher-dimensional generalization of the $1$-dimensional projective
notion captured by $\angH$.

\item \emph{The triangle inequality of $\angH$ provides a uniqueness
obstruction.} As in Theorem~\ref{thm:angular-unique-decoding}, the
triangle inequality immediately forbids two distinct projective
codewords from being simultaneously $\rho$-close to the same input
when $2\rho < d$. The corresponding obstruction in the affine setting
underlies the unique-decoding regime of all known proximity-gap
proofs.
\end{itemize}

A natural direction for future work is to define a \emph{projective}
proximity-gap problem in the $\angH$ framework --- i.e., projective
lines in $\mathbb{P}(\Fqn)$ rather than affine lines in $\Fqn$ --- and
to determine whether the $\angH$ triangle inequality and the
projective-space geometry yield improvements in the unique-decoding
regime analogous to those of \cite{BCKS25}. We leave this to future
investigation.

\section{Concluding Remarks}\label{sec:conclusion}

We have introduced an angular metric $\angH$ on nonzero vectors over
$\Fq$, defined by minimizing the Hamming distance over scalar
multiples of one argument. The function $\angH$ satisfies the
metric axioms up to scalar equivalence and descends to a genuine
integer-valued metric on the projective space $\mathbb{P}(\Fqn)$. As
an application, we showed that the triangle inequality of $\angH$
gives a clean angular analog of classical unique decoding
(Theorem~\ref{thm:angular-unique-decoding}).

Several directions invite further study.
\begin{enumerate}[label=(\arabic*)]
\item \emph{Higher-dimensional angles.} The angle between a vector
and a $k$-dimensional projective subspace, or between two such
subspaces, is the natural Grassmannian counterpart of $\angH$. This
should connect $\angH$ to existing work on Grassmann codes and
subspace designs \cite{GHKSS25}.

\item \emph{Affine $\angH$.} Adapting $\angH$ to capture
\emph{affine} lines (in particular, sets of the form
$\{f + zg : z \in \Fq\}$) would provide an angular language for the
proximity-gap problem itself, not just for its projective shadow.

\item \emph{Other metric vector spaces.} The construction of $\angH$
from $\dH$ uses only the scalar-invariance of $\wt$. Any
metric vector space whose norm is invariant under nonzero scalars
admits an analogous angular metric. It would be interesting to study
such constructions over rings or for non-Hamming metrics
(rank metric, Lee metric, etc.).
\end{enumerate}


\end{document}